\newcommand{\hHD}{\hat{H}_D}
\newcommand{\hHC}{\hat{H}_C}
\newcommand{\HD}{{H}_{\Delta}}
\newcommand{\HG}{{H}_{\Gamma}}
\newcommand{\Z}{{\mathbb Z}}
\newcommand{\GF}{\mathrm{GF}}
\newcommand{\tentative}{\noindent{{\underline{\tt tentative decision}}} }
\newcommand{\ctov}{\noindent{{\underline{\tt horizontal step}}} }
\newcommand{\vtoc}{\noindent{{\underline{\tt vertical step}}} }
\newcommand{\initialization}{\noindent{{\underline{\tt initialization}}} }
\newtheorem{proposition}{Proposition}
\newtheorem{theorem}[proposition]{Theorem}
\newtheorem{lemma}[proposition]{Lemma}
\newtheorem{fact}[proposition]{Fact}
\newtheorem{observation}[proposition]{Observation}
\newtheorem{example}[proposition]{Example}
\title{Quantum Error Correction beyond the Bounded Distance Decoding Limit}
\author{
Kenta~Kasai,~\IEEEmembership{Member,~IEEE,} 
Manabu~Hagiwara,~\IEEEmembership{Member,~IEEE,} 
Hideki~Imai,~\IEEEmembership{Life Fellow,~IEEE,} 
and~Kohichi~Sakaniwa,~\IEEEmembership{Senior Member,~IEEE}
}
\begin{document}
\maketitle
\begin{abstract}
In this paper, we consider  quantum error correction over depolarizing channels 
with non-binary low-density parity-check codes defined over Galois field of size $2^p$. 
The proposed quantum error correcting codes are 
based on the binary quasi-cyclic CSS (Calderbank, Shor and Steane) codes. 
The resulting quantum codes outperform the best known quantum codes and 
surpass the performance limit of the bounded distance decoder. 
By increasing the size of the underlying Galois field, i.e., $2^p$, the error floors are considerably improved. 
\end{abstract}
\begin{keywords}
    LDPC code, non-binary LDPC codes, belief propagation, Galois field, iterative decoding, CSS codes, quantum error-correcting codes
\end{keywords}
\section{Introduction}
\begin{figure*}[t]
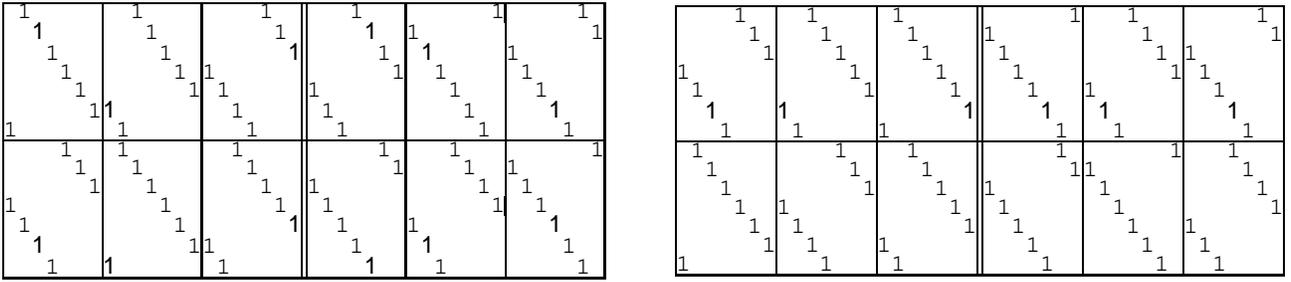

\begin{center}
 \begin{tabular}{cc}
 \begin{minipage}{0.47\textwidth}
\small{
\renewcommand\arraystretch{0.5}
\begin{tabular}
{|@{}l@{}@{}l@{}@{}l@{}@{}l@{}@{}l@{}@{}l@{}@{}l@{}|@{}l@{}@{}l@{}@{}l@{}@{}l@{}@{}l@{}@{}l@{}@{}l@{}|@{}l@{}@{}l@{}@{}l@{}@{}l@{}@{}l@{}@{}l@{}@{}l@{}||@{}l@{}@{}l@{}@{}l@{}@{}l@{}@{}l@{}@{}l@{}@{}l@{}|@{}l@{}@{}l@{}@{}l@{}@{}l@{}@{}l@{}@{}l@{}@{}l@{}|@{}l@{}@{}l@{}@{}l@{}@{}l@{}@{}l@{}@{}l@{}@{}l@{}|@{}}
\hline
&\tt{1}&&&&&&&&\tt{1}&&&&&&&&&\tt{1}&&&&&&\tt{1}&&&&&&&&&&\tt{1}&&&&&&\tt{1}&\\
&&\sf{1}&&&&&&&&\tt{1}&&&&&&&&&\tt{1}&&&&&&\sf{1}&&&\tt{1}&&&&&&&&&&&&&\tt{1}\\
&&&\tt{1}&&&&&&&&\tt{1}&&&&&&&&&\sf{1}&&&&&&\tt{1}&&&\sf{1}&&&&&&\tt{1}&&&&&&\\
&&&&\tt{1}&&&&&&&&\tt{1}&&\tt{1}&&&&&&&&&&&&&\tt{1}&&&\tt{1}&&&&&&\tt{1}&&&&&\\
&&&&&\tt{1}&&&&&&&&\tt{1}&&\tt{1}&&&&&&\tt{1}&&&&&&&&&&\tt{1}&&&&&&\tt{1}&&&&\\
&&&&&&\tt{1}&\sf{1}&&&&&&&&&\tt{1}&&&&&&\tt{1}&&&&&&&&&&\tt{1}&&&&&&\sf{1}&&&\\
\tt{1}&&&&&&&&\tt{1}&&&&&&&&&\tt{1}&&&&&&\tt{1}&&&&&&&&&&\tt{1}&&&&&&\tt{1}&&\\\hline
&&&&\tt{1}&&&&\tt{1}&&&&&&&&\tt{1}&&&&&&&&&&\tt{1}&&&&&\tt{1}&&&&&&&&&&\tt{1}\\
&&&&&\tt{1}&&&&\tt{1}&&&&&&&&\tt{1}&&&&&&&&&&\tt{1}&&&&&\tt{1}&&&\tt{1}&&&&&&\\
&&&&&&\tt{1}&&&&\tt{1}&&&&&&&&\tt{1}&&&\tt{1}&&&&&&&&&&&&\tt{1}&&&\tt{1}&&&&&\\
\tt{1}&&&&&&&&&&&\tt{1}&&&&&&&&\tt{1}&&&\tt{1}&&&&&&&&&&&&\tt{1}&&&\tt{1}&&&&\\
&\tt{1}&&&&&&&&&&&\tt{1}&&&&&&&&\sf{1}&&&\tt{1}&&&&&\tt{1}&&&&&&&&&&\sf{1}&&&\\
&&\sf{1}&&&&&&&&&&&\tt{1}&\tt{1}&&&&&&&&&&\tt{1}&&&&&\sf{1}&&&&&&&&&&\tt{1}&&\\
&&&\tt{1}&&&&\sf{1}&&&&&&&&\tt{1}&&&&&&&&&&\sf{1}&&&&&\tt{1}&&&&&&&&&&\tt{1}&\\\hline
\end{tabular}
\renewcommand\arraystretch{1.0}
} \end{minipage}
 & \begin{minipage}{0.47\textwidth}
 \small{ 
\renewcommand\arraystretch{0.539}
\begin{tabular}
{|@{}l@{}@{}l@{}@{}l@{}@{}l@{}@{}l@{}@{}l@{}@{}l@{}|@{}l@{}@{}l@{}@{}l@{}@{}l@{}@{}l@{}@{}l@{}@{}l@{}|@{}l@{}@{}l@{}@{}l@{}@{}l@{}@{}l@{}@{}l@{}@{}l@{}||@{}l@{}@{}l@{}@{}l@{}@{}l@{}@{}l@{}@{}l@{}@{}l@{}|@{}l@{}@{}l@{}@{}l@{}@{}l@{}@{}l@{}@{}l@{}@{}l@{}|@{}l@{}@{}l@{}@{}l@{}@{}l@{}@{}l@{}@{}l@{}@{}l@{}|@{}}
\hline
&&&&\tt{1}&&&&&\tt{1}&&&&&&\tt{1}&&&&&&&&&&&&\tt{1}&&&&\tt{1}&&&&&&&&&\tt{1}&\\
&&&&&\tt{1}&&&&&\tt{1}&&&&&&\tt{1}&&&&&\tt{1}&&&&&&&&&&&\tt{1}&&&&&&&&&\tt{1}\\
&&&&&&\tt{1}&&&&&\tt{1}&&&&&&\tt{1}&&&&&\tt{1}&&&&&&&&&&&\tt{1}&&\tt{1}&&&&&&\\
\tt{1}&&&&&&&&&&&&\tt{1}&&&&&&\tt{1}&&&&&\tt{1}&&&&&&&&&&&\tt{1}&&\tt{1}&&&&&\\
&\tt{1}&&&&&&&&&&&&\tt{1}&&&&&&\tt{1}&&&&&\tt{1}&&&&\tt{1}&&&&&&&&&\tt{1}&&&&\\
&&\sf{1}&&&&&\sf{1}&&&&&&&&&&&&&\sf{1}&&&&&\sf{1}&&&&\sf{1}&&&&&&&&&\sf{1}&&&\\
&&&\tt{1}&&&&&\tt{1}&&&&&&\tt{1}&&&&&&&&&&&&\tt{1}&&&&\tt{1}&&&&&&&&&\tt{1}&&\\\hline
&\tt{1}&&&&&&&&&&\tt{1}&&&&&\tt{1}&&&&&&&&&&\tt{1}&&&&&&&&\tt{1}&&&&\tt{1}&&&\\
&&\tt{1}&&&&&&&&&&\tt{1}&&&&&\tt{1}&&&&&&&&&&\tt{1}&\tt{1}&&&&&&&&&&&\tt{1}&&\\
&&&\tt{1}&&&&&&&&&&\tt{1}&&&&&\tt{1}&&&\tt{1}&&&&&&&&\tt{1}&&&&&&&&&&&\tt{1}&\\
&&&&\tt{1}&&&\tt{1}&&&&&&&&&&&&\tt{1}&&&\tt{1}&&&&&&&&\tt{1}&&&&&&&&&&&\tt{1}\\
&&&&&\tt{1}&&&\tt{1}&&&&&&&&&&&&\tt{1}&&&\tt{1}&&&&&&&&\tt{1}&&&&\tt{1}&&&&&&\\
&&&&&&\tt{1}&&&\tt{1}&&&&&\tt{1}&&&&&&&&&&\tt{1}&&&&&&&&\tt{1}&&&&\tt{1}&&&&&\\
\tt{1}&&&&&&&&&&\tt{1}&&&&&\tt{1}&&&&&&&&&&\tt{1}&&&&&&&&\tt{1}&&&&\tt{1}&&&&\\\hline
\end{tabular}
 }\end{minipage}
 \end{tabular}
\end{center}
\label{120458_5Jul10}
\caption{An example of binary $(J=2, L=6, P=7)$-QC parity-check matrix pair $(\hHC, \hHD)$ constructed by the method in Theorem \ref{Hagiwara-Imai} with $\sigma=2$ and $\tau=3$. 
It holds that $\hHC\hHD^\mathsf{T}=0$. For any row $m'$ of $\hHD$
}
\label{134801_6Jul10}
\end{figure*}
In 1963, Gallager invented low-density parity-check (LDPC) codes \cite{gallager_LDPC}.
Due to the sparseness of the code representation, 
LDPC codes are efficiently decoded by the sum-product algorithm. 
By a powerful optimization method {\it density evolution} \cite{910577}, developed by Richardson and Urbanke,
messages of sum-product decoding can be statistically evaluated. 
The optimized LDPC codes exhibit error correcting performance very close to the Shannon limit \cite{richardson01design}.
Recently, LDPC codes have been generalized from a point of view of Galois fields, i.e. non-binary LDPC codes are proposed.
Non-binary LDPC codes were invented by Gallager \cite{gallager_LDPC}. 
Davey and MacKay \cite{DaveyMacKayGFq} found that non-binary LDPC codes can outperform binary ones. 

Quantum LDPC codes, which are quantum error-correcting codes, have been developed in a similar manner to (classical) LDPC codes.
By the discovery of CSS (Calderbank, Shor and Steane)
codes \cite{calderbank96, steane96b} and stabilizer codes \cite{gottesman96}, 
the notion of parity-check measurement, which is a generalized notion of parity-check matrix, is introduced to quantum information theory.
In particular, a parity-check measurement for a CSS code is characterized by a pair of parity-check matrices 
which satisfy the following condition: the product of one of the pair and the transposed other is subjected to be a zero-matrix.

Quantum LDPC codes were first introduced by Postol in \cite{Pos01a}. 
The above CSS constraint on the parity-check matrices makes the design of the quantum LDPC codes difficult. 
MacKay et al. proposed the {\it bicycle} codes \cite{1337106} and Cayley graph based CSS codes \cite{MoreSparseGraph}. 
In \cite{4957637}, Poulin et al. proposed serial turbo codes for the quantum error correction. These codes can be decoded by an efficient iterative decoder.
To the best of the authors' knowledge, these codes 
\cite{1337106, MoreSparseGraph, 4957637} are the best known quantum error correcting codes among efficiently decodable quantum LDPC codes so far. 
In \cite{4557323}, Hagiwara and Imai proposed a construction method of CSS code pair that has quasi-cyclic (QC) parity-check matrices with
arbitrary regular even row weight $L\ge 4$ and column weight $J$ such that $L/2\ge J\ge 2$. 
However, the resulting codes do not outperform the codes proposed by MacKay el al. \cite{1337106, MoreSparseGraph}. 

Generally, LDPC CSS codes tend to have poor minimum distance. 
The minimum distance of an LDPC CSS code is upper-bounded by the row weight of the parity-check matrix. 
This is  due to the dual and sparse constraint on the parity-check matrices. 
When the LDPC CSS codes are used with large code length, the poor minimum distance leads to high error floors. 
Therefore, it is desired to establish the construction method of quantum LDPC codes with large minimum distance. 
We should note that it is important to study quantum LDPC codes with large minimum distance which grows with code length  \cite{5205648} for constructing quantum LDPC codes with
vanishing decoding error probability. 


Non-binary LDPC codes are defined as codes over $\GF(2^p)$ with $p>2$.  
The  parity-check matrices of non-binary LDPC codes are given as sparse matrices over $\GF(2^p)$. 
In this paper, we investigate non-binary LDPC codes for quantum error correction. 
It is empirically known that the best classical non-binary LDPC codes have column weight $J=2$
from a point of view of error-correcting performance \cite{4641893}.
Moreover, due to the sparse representation of non-binary parity-check matrices of column weight $J=2$, 
the non-binary LDPC codes are efficiently decoded by FFT-based sum-product algorithm \cite{706440}. 

In this paper, we propose a construction method of a binary CSS code which can be viewed also as a pair of non-binary LDPC codes.
More precisely, the proposed construction method produces a binary code pair $(C, D)$ such that $C\supset D^{\perp}$, and 
$C$ and $D$ are also defined by non-binary sparse parity-check matrices over $\GF(2^p)$ of column weight $J=2$. 
This satisfies the constraint of CSS codes. 
To this end, we first construct $PJ\times PL$ binary QC parity-check matrix pair $(\hHC, \hHD)$ 
with column weight $J=2$ and row weight $L$ such that $\hHC\hHD^{\mathsf{T}}=0$ by the method developed in \cite{4557323}. 
Solving some linear equations on $\Z_{2^p-1}$, we get $PJ\times PL$ non-binary parity-check matrix pair $(\HG, \HD)$ 
with column weight $J=2$ and row weight $L$ such that $\HG\HD^{\mathsf{T}}=0$. 
It is known that a natural linear map from $\mathrm{GF}(2^p)$ to $\mathrm{GF}(2)^{p\times p}$ is given so that 
through this map, the non-binary LDPC matrix pair $(\HG, \HD)$ can be viewed as a binary LDPC matrix pair $(H_C, H_D)$ such that $H_CH_D^{\mathsf{T}}=0$.  
Numerical experiments show the resulting CSS codes outperform the best known quantum error correcting codes and surpass the performance limit of the bounded 
distance decoder. 
By increasing the size of the underlying Galois field, i.e., $2^p$, the error floors are considerably improved. 

The rest of this paper is organized as follows. 
Section \ref{230852_10Jun10} describes the construction method of  a non-binary twisted LDPC parity-check matrix pair $(HG, \HD)$ of column weight $J=2$. 
Section \ref{231512_10Jun10} describes the decoding algorithm of the binary twisted code pair $(C, D)$. 
Section \ref{231523_10Jun10} demonstrates the decoding performance of the proposed codes. 
\section{Construction of Non-Binary Matrix Pair with Column Weight 2}
\label{230852_10Jun10}
In this section, we will construct a binary code pair $(C, D)$ defined by orthogonal parity-check matrices $H_C$ and $H_D$, 
where we call two matrices $X$ and $Y$  orthogonal if $XY^\mathsf{T}=0$.
Let $pN$ qubits be the code length. 
The binary codes $C$ and $D$ are designed in such a way that $C$ and $D$ can be represented 
also by non-binary sparse parity-check matrices over $\GF(2^p)$ of column size $N$ and column weight $J=2$. 
To this end, we start with binary QC matrices and extend them to matrices over $\GF(2^p)$. 
The following is the outline of construction procedure. 
\begin{enumerate}[A{)}]
\item
Choose integers $J=2,P$ and $L$ such that $PL=N$. 
Construct  a pair of orthogonal
 $PJ\times PL$ quasi-cyclic binary matrices $\hHC$ and $\hHD$ 
 by following the procedure of \cite{quant-ph/0701020v4}.
\item
A pair of  $PJ\times PL$ matrices $ \HG$ and $\HD$ over $\GF(2^p)$ are constructed 
by replacing non-zero entries of the matrices $\hHC$ and $\hHD$ with  elements of $\GF(2^p)$. 
The orthogonality condition $ \HG\HD^{\mathsf{T}}=0$ imposes a set of linear constraints on the non-zero entries of 
$ \HG$ and $\HD$, that can be solved by Gaussian elimination. 
\item
 The entries of $ \HG$ and $\HD$ are mapped to elements of $\GF(2)^{p \times p}$ that
 preserves addition and multiplication. These properties of the mapping
 imply that the resulting matrices $H_C$ and $H_D$ remain orthogonal.
\end{enumerate}
The proposed code can be viewed as a concatenated code with an inner LDPC code and an outer code of rate 1 and length $p$. 
Indeed, the code obtained at step A) is a quantum LDPC code \cite{quant-ph/0701020v4}. 
Steps B) and C) have the effect of replacing each qubit by $p$ qubits while preserving the orthogonality property. 

Each step in the procedure is explained in the following sub-sections. 
\subsection{Quasi-Cyclic Binary Construction}
Let $\hHC$ and $\hHD$ be $PJ\times PL$ binary parity-check matrices defined as follows:
 \begin{align*}
 &\hHC := (I(c_{j, \ell}))_{0\le j<J, 0\le \ell<L}, \\
 &\hHD := (I(d_{j, \ell}))_{0\le j<J, 0\le \ell<L}, \\
 &I(1) :=
  \begin{bmatrix}
   0 & 1 & 0 & 0 & 0 \\
   0 & 0 & 1 & 0 & 0 \\
   0 & 0 & 0 & \ddots & 0 \\
   0 & 0 & 0 & 0 & 1 \\
   1 & 0 & 0 & 0 & 0
  \end{bmatrix}\in \{0, 1\}^{P\times P}, \\
 &I(c_{j, \ell}) :=I(1)^{c_{j, \ell}}.
\end{align*}
We refer to such matrices as ($J$, $L$, $P$)-QC matrices. 

Hagiwara and Imai proposed \cite{4557323} the following method for constructing a QC parity-check matrix pair $(\hHC, \hHD)$. 
In the original paper \cite{4557323}, the construction method is more flexible about the row size of the matrices, i.e., $\hHC$ and $\hHD$ can have different row sizes. 
For simplicity, in this paper,  we focus on $\hHC$ and $\hHD$ with the same row size $JP$. 
\begin{theorem}[{\cite[Theorem 5.2]{quant-ph/0701020v4}}]
\label{Hagiwara-Imai}
Define 
 $\mathbb{Z}_{P}^{*} := \{ z \in \mathbb{Z}_{P}\mid \exists a \in \mathbb{Z}_{P}, za=1 \}, $ and
 $\mathrm{ord}(\sigma) := \min\{ m > 0 \mid \sigma^m = 1\}.$
For  integers $P>2, J, L, 0\le \sigma<P$ and $0\le\tau <P$ such that 
\begin{align}
&\sigma, \tau\in\mathbb{Z}_{P}^{*},\label{164916_7Jul10}\\
& L/2 =  \mathrm{ord}(\sigma), \label{231740_4Jul10}\\
& 1 \le J \le \mathrm{ord}(\sigma), \nonumber\\
&\mathrm{ord}(\sigma) \neq \# \mathbb{Z}_{P}^{*}, \nonumber\label{211218_6Jul10}\\  
&1- \sigma^{j} \in \mathbb{Z}_{P}^{*} \text{ for all } 1 \le j < \mathrm{ord}(\sigma), \\
&\tau \neq 1, \sigma, \sigma^2, \dots, \sigma^{\mathrm{ord}(\sigma)-1},\label{215258_7Jul10} 
\end{align}
let $\hHC$ and $\hHD$ be two $(J, L, P)$-QC binary matrices such that 
\begin{align}
&\hHC=(I(c_{j, \ell}))_{0\le j<J, 0\le\ell<L}, \nonumber\\
&\hHD=(I(d_{j, \ell}))_{0\le j<J, 0\le\ell<L}, \nonumber\\
&   c_{j, \ell} := 
 \left\{
  \begin{array}{rc}
        \sigma^{-j + \ell}     &  0 \le \ell < L/2\label{051947_15Jun10}  \\
   \tau \sigma^{-j + \ell}     &  L/2 \le \ell < L,  \\
  \end{array}
 \right.\\
 & d_{j, \ell} := 
 \left\{
  \begin{array}{rc}
    -\tau \sigma^{j-\ell}     &  0 \le \ell < L/2\label{051527_15Jun10}  \\
        - \sigma^{j-\ell}     &  L/2 \le \ell < L  \\
  \end{array}
 \right.
\end{align} 
then
it holds that $\hat{H}_{C}\hHD^{\mathsf{T}} = 0$ and there are no cycles of size 4 
in the Tanner graph of $\hHC$ and $\hHD$. 
\end{theorem}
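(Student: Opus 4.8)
The plan is to verify the two conclusions separately: first the orthogonality $\hHC\hHD^{\mathsf{T}}=0$, then the absence of $4$-cycles. For the orthogonality, the key observation is that since each block of $\hHC$ and $\hHD$ is a cyclic-shift matrix $I(\cdot)\in\{0,1\}^{P\times P}$ and these form a representation of $\Z_P$ (i.e.\ $I(a)I(b)=I(a+b)$ and $I(a)^{\mathsf{T}}=I(-a)$), the $(j,j')$ block of $\hHC\hHD^{\mathsf{T}}$ equals $\sum_{\ell=0}^{L-1} I(c_{j,\ell})I(d_{j',\ell})^{\mathsf{T}}=\sum_{\ell=0}^{L-1} I(c_{j,\ell}-d_{j',\ell})$. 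So I would compute the exponents $c_{j,\ell}-d_{j',\ell}$ using the defining formulas, splitting the sum over $\ell$ into the ranges $0\le\ell<L/2$ and $L/2\le\ell<L$. In the first range this exponent is $\sigma^{-j+\ell}+\tau\sigma^{j'-\ell}$ and in the second it is $\tau\sigma^{-j+\ell}+\sigma^{j'-\ell}$. Reindexing the second sum by $\ell\mapsto\ell-L/2$ and using $\sigma^{L/2}=1$ (from \eqref{231740_4Jul10}), both ranges contribute terms indexed by the same set $\{\sigma^{-j+\ell}+\tau\sigma^{j'-\ell}: 0\le\ell<L/2\}$ — but with opposite signs coming from the placement of $\tau$ versus $1$ in $c$ and $d$. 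Hence the two halves cancel block-by-block as formal sums of shift matrices, giving $\hHC\hHD^{\mathsf{T}}=0$.

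For the girth statement, I would argue that a $4$-cycle in the Tanner graph of a QC matrix $(I(a_{j,\ell}))$ corresponds exactly to a choice of two distinct column-block indices $\ell\neq\ell'$ and two (not necessarily distinct) row-block indices $j,j'$ together with a solution of $a_{j,\ell}-a_{j,\ell'}\equiv a_{j',\ell}-a_{j',\ell'}\pmod P$ where the two sides come from genuinely distinct circulant positions. Since $J=2$ effectively (or more generally with the given ranges), one reduces to showing $c_{j,\ell}-c_{j',\ell}\neq c_{j,\ell'}-c_{j',\ell'}$ in $\Z_P$ for $j\neq j'$, $\ell\neq\ell'$, and similarly for $\hHD$. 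Plugging in $c_{j,\ell}=\sigma^{-j+\ell}$ (times $1$ or $\tau$ depending on the half), the difference $c_{j,\ell}-c_{j',\ell}=(\sigma^{-j}-\sigma^{-j'})\sigma^{\ell}$ up to the $\tau$-factor, so the $4$-cycle condition becomes $(\sigma^{-j}-\sigma^{-j'})(\sigma^{\ell}-\sigma^{\ell'})\equiv 0\pmod P$ when $\ell,\ell'$ lie in the same half, and a mixed condition involving $\tau$ when they lie in different halves. The hypotheses \eqref{164916_7Jul10}, $1-\sigma^j\in\Z_P^*$ for $1\le j<\mathrm{ord}(\sigma)$, and \eqref{215258_7Jul10} are precisely designed to force each factor to be a unit (hence nonzero) in the relevant range, so no such cycle exists; the same computation applies verbatim to $\hHD$ after noting its entries differ from those of $\hHC$ only by the sign and the swap of the roles of $1$ and $\tau$.

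The main obstacle is bookkeeping rather than conceptual: correctly handling the case split between the two column-halves $\ell<L/2$ and $\ell\ge L/2$, both in the orthogonality cancellation (where one must match the term $\sigma^{-j+\ell}+\tau\sigma^{j'-\ell}$ from the $c$-half with a $d$-half term after the shift by $L/2$, using $\mathrm{ord}(\sigma)=L/2$) and in the girth argument (where a $4$-cycle might straddle the two halves, producing a condition like $\tau\sigma^{a}-\sigma^{b}\equiv\tau\sigma^{a'}-\sigma^{b'}$ that must be ruled out using \eqref{215258_7Jul10}, i.e.\ that $\tau$ is not a power of $\sigma$). Once the straddling cases are shown to reduce to the unit hypotheses, the remaining same-half cases are immediate. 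I would also note in passing that the condition $\mathrm{ord}(\sigma)\neq\#\Z_P^*$ is what guarantees the existence of a valid $\tau$ satisfying \eqref{215258_7Jul10}, so the hypotheses are non-vacuous, though this is not needed for the implication itself. Since this is Theorem~5.2 of \cite{quant-ph/0701020v4} quoted verbatim, I would ultimately cite that reference for the full verification and present only the above sketch.
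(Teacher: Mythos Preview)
The paper does not give a proof of this theorem at all: it is quoted verbatim from \cite{quant-ph/0701020v4} and only illustrated by Example~\ref{ex1}. You correctly recognize this at the end of your proposal and say you would ultimately cite the reference, which is exactly what the paper does. So there is no ``paper's own proof'' to compare against, and your closing sentence already matches the paper's treatment.

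Your sketch itself is sound in outline, but the orthogonality step has a small gap worth fixing. After reindexing the second half by $\ell\mapsto\ell-L/2$ and using $\sigma^{L/2}=1$, the two half-sums contribute the exponents
\[
a_\ell:=\sigma^{-j+\ell}+\tau\sigma^{j'-\ell}
\quad\text{and}\quad
b_\ell:=\tau\sigma^{-j+\ell}+\sigma^{j'-\ell},
\qquad 0\le\ell<L/2.
\]
These are \emph{not} termwise equal, and the phrase ``opposite signs'' is misleading since the sum is over $\GF(2)$. What actually happens is that the involution $\ell\mapsto j+j'-\ell$ on $\Z_{L/2}$ sends $b_\ell$ to $a_{j+j'-\ell}$, so the multisets $\{a_\ell\}$ and $\{b_\ell\}$ coincide and the combined sum $\sum_\ell I(a_\ell)+\sum_\ell I(b_\ell)$ vanishes modulo~2. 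Once you insert this bijection explicitly, the orthogonality argument is complete. Your girth argument, including the cross-half case reducing to $\tau\notin\{1,\sigma,\dots,\sigma^{L/2-1}\}$, is correct as stated.
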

\begin{figure*}[t]
\begin{center}
 \begin{tabular}{cc}
 \begin{minipage}{0.47\textwidth}
\small{
\renewcommand\arraystretch{0.5}
\begin{tabular}
{|@{}l@{}@{}l@{}@{}l@{}@{}l@{}@{}l@{}@{}l@{}@{}l@{}|@{}l@{}@{}l@{}@{}l@{}@{}l@{}@{}l@{}@{}l@{}@{}l@{}|@{}l@{}@{}l@{}@{}l@{}@{}l@{}@{}l@{}@{}l@{}@{}l@{}||@{}l@{}@{}l@{}@{}l@{}@{}l@{}@{}l@{}@{}l@{}@{}l@{}|@{}l@{}@{}l@{}@{}l@{}@{}l@{}@{}l@{}@{}l@{}@{}l@{}|@{}l@{}@{}l@{}@{}l@{}@{}l@{}@{}l@{}@{}l@{}@{}l@{}|@{}}
\hline
&\tt{4}&&&&&&&&\tt{9}&&&&&&&&&\tt{a}&&&&&&\tt{2}&&&&&&&&&&\tt{b}&&&&&&\tt{8}&\\
&&\sf{5}&&&&&&&&\tt{6}&&&&&&&&&\tt{2}&&&&&&\sf{b}&&&\tt{5}&&&&&&&&&&&&&\tt{d}\\
&&&\tt{d}&&&&&&&&\tt{c}&&&&&&&&&\sf{9}&&&&&&\tt{d}&&&\sf{c}&&&&&&\tt{0}&&&&&&\\
&&&&\tt{2}&&&&&&&&\tt{8}&&\tt{c}&&&&&&&&&&&&&\tt{c}&&&\tt{2}&&&&&&\tt{9}&&&&&\\
&&&&&\tt{a}&&&&&&&&\tt{7}&&\tt{6}&&&&&&\tt{b}&&&&&&&&&&\tt{3}&&&&&&\tt{2}&&&&\\
&&&&&&\tt{d}&\sf{9}&&&&&&&&&\tt{6}&&&&&&\tt{c}&&&&&&&&&&\tt{5}&&&&&&\sf{7}&&&\\
\tt{e}&&&&&&&&\tt{0}&&&&&&&&&\tt{c}&&&&&&\tt{e}&&&&&&&&&&\tt{3}&&&&&&\tt{1}&&\\\hline
&&&&\tt{9}&&&&\tt{2}&&&&&&&&\tt{1}&&&&&&&&&&\tt{5}&&&&&\tt{4}&&&&&&&&&&\tt{a}\\
&&&&&\tt{a}&&&&\tt{4}&&&&&&&&\tt{d}&&&&&&&&&&\tt{5}&&&&&\tt{0}&&&\tt{d}&&&&&&\\
&&&&&&\tt{1}&&&&\tt{8}&&&&&&&&\tt{4}&&&\tt{1}&&&&&&&&&&&&\tt{d}&&&\tt{0}&&&&&\\
\tt{e}&&&&&&&&&&&\tt{5}&&&&&&&&\tt{a}&&&\tt{1}&&&&&&&&&&&&\tt{b}&&&\tt{6}&&&&\\
&\tt{d}&&&&&&&&&&&\tt{d}&&&&&&&&\sf{e}&&&\tt{7}&&&&&\tt{5}&&&&&&&&&&\sf{c}&&&\\
&&\sf{e}&&&&&&&&&&&\tt{e}&\tt{5}&&&&&&&&&&\tt{7}&&&&&\sf{5}&&&&&&&&&&\tt{8}&&\\
&&&\tt{a}&&&&\sf{3}&&&&&&&&\tt{d}&&&&&&&&&&\sf{6}&&&&&\tt{a}&&&&&&&&&&\tt{8}&\\\hline
\end{tabular}

} \end{minipage}
 & \begin{minipage}{0.47\textwidth}
 \small{ 
\renewcommand\arraystretch{0.538}
\begin{tabular}
{|@{}l@{}@{}l@{}@{}l@{}@{}l@{}@{}l@{}@{}l@{}@{}l@{}|@{}l@{}@{}l@{}@{}l@{}@{}l@{}@{}l@{}@{}l@{}@{}l@{}|@{}l@{}@{}l@{}@{}l@{}@{}l@{}@{}l@{}@{}l@{}@{}l@{}||@{}l@{}@{}l@{}@{}l@{}@{}l@{}@{}l@{}@{}l@{}@{}l@{}|@{}l@{}@{}l@{}@{}l@{}@{}l@{}@{}l@{}@{}l@{}@{}l@{}|@{}l@{}@{}l@{}@{}l@{}@{}l@{}@{}l@{}@{}l@{}@{}l@{}|@{}}
\hline
&&&&\tt{2}&&&&&\tt{8}&&&&&&\tt{4}&&&&&&&&&&&&\tt{7}&&&&\tt{7}&&&&&&&&&\tt{9}&\\
&&&&&\tt{7}&&&&&\tt{e}&&&&&&\tt{1}&&&&&\tt{6}&&&&&&&&&&&\tt{2}&&&&&&&&&\tt{7}\\
&&&&&&\tt{7}&&&&&\tt{4}&&&&&&\tt{1}&&&&&\tt{8}&&&&&&&&&&&\tt{a}&&\tt{1}&&&&&&\\
\tt{2}&&&&&&&&&&&&\tt{b}&&&&&&\tt{6}&&&&&\tt{2}&&&&&&&&&&&\tt{5}&&\tt{a}&&&&&\\
&\tt{8}&&&&&&&&&&&&\tt{3}&&&&&&\tt{4}&&&&&\tt{a}&&&&\tt{1}&&&&&&&&&\tt{8}&&&&\\
&&\sf{d}&&&&&\sf{a}&&&&&&&&&&&&&\sf{a}&&&&&\sf{7}&&&&\sf{7}&&&&&&&&&\sf{c}&&&\\
&&&\tt{3}&&&&&\tt{6}&&&&&&\tt{8}&&&&&&&&&&&&\tt{3}&&&&\tt{3}&&&&&&&&&\tt{5}&&\\\hline
&\tt{b}&&&&&&&&&&\tt{a}&&&&&\tt{d}&&&&&&&&&&\tt{9}&&&&&&&&\tt{4}&&&&\tt{c}&&&\\
&&\tt{a}&&&&&&&&&&\tt{2}&&&&&\tt{5}&&&&&&&&&&\tt{d}&\tt{a}&&&&&&&&&&&\tt{1}&&\\
&&&\tt{9}&&&&&&&&&&\tt{1}&&&&&\tt{9}&&&\tt{c}&&&&&&&&\tt{a}&&&&&&&&&&&\tt{b}&\\
&&&&\tt{0}&&&\tt{7}&&&&&&&&&&&&\tt{a}&&&\tt{4}&&&&&&&&\tt{0}&&&&&&&&&&&\tt{e}\\
&&&&&\tt{0}&&&\tt{9}&&&&&&&&&&&&\tt{3}&&&\tt{a}&&&&&&&&\tt{7}&&&&\tt{c}&&&&&&\\
&&&&&&\tt{d}&&&\tt{2}&&&&&\tt{b}&&&&&&&&&&\tt{9}&&&&&&&&\tt{6}&&&&\tt{e}&&&&&\\
\tt{6}&&&&&&&&&&\tt{7}&&&&&\tt{a}&&&&&&&&&&\tt{2}&&&&&&&&\tt{2}&&&&\tt{e}&&&&\\\hline
\end{tabular}
 }\end{minipage}
 \end{tabular}
\end{center}
\caption{An example of non-binary matrices $\HG=(\gamma_{m, n})_{0\le m<M, 0\le n<N}$ and 
$\HD=(\delta_{m, n})_{0\le m<M, 0\le n<N}$ over $\GF(2^4)$ such that $\HG\HD^{\mathsf{T}}=0$ with $M=14$ and $N=42$. 
Each non-zero entry is represented as the hexadecimal number of $\log_\alpha(\gamma_{m, n})$, where $\alpha$ is a primitive element such that 
$\alpha^4+\alpha+1=0$.
E.g., $\alpha^{0}$ and $\alpha^{11}$ are represented as 
 0 and b,  respectively. }
\label{130803_26Jun10}
\end{figure*}
From Theorem \ref{Hagiwara-Imai}, we obtain two $JP\times LP$ binary matrices $\hHC$ and $\hHD$
such that $\hHC\hHD^{\mathsf{T}}=0$ and the Tanner graphs of $\hHC$ and $\hHD$ are free of cycles of size 4. 
 We give an example. 
\begin{example}
\label{ex1}
 With parameters $J=2, L=6, P=7, \sigma=2$ and $\tau=3$, 
 from Theorem \ref{Hagiwara-Imai}, we are given a $JP\times LP$ binary matrix pair $(\hHC, \hHD)$ such that $\hHC\hHD^{\mathsf{T}}=0$ as follows. 
 \begin{align*}
 & \hHC=
 \begin{pmatrix}
 I(1)&I(2)&I(4)&I(3)&I(6)&I(5)\\
 I(4)&I(1)&I(2)&I(5)&I(3)&I(6)
 \end{pmatrix}, 
\\
& \hHD=
 \begin{pmatrix}
 I(4)&I(2)&I(1)&I(6)&I(3)&I(5)\\
 I(1)&I(4)&I(2)&I(5)&I(6)&I(3)
 \end{pmatrix}.
 \end{align*}
The binary representation of these matrices are given in Fig.~\ref{134801_6Jul10}. 
It can be verified that there are no cycles of length 4 in Tanner graphs of $\hHC$ and $\hHD$. 
\end{example}

We observe a fundamental property of $\hHC$ and $\hHD$ as follows.
\begin{observation}
Let $m':=5$.  The $m'$-th row of $\hHD$ has non-zero entries at the $n_0=2, n_2=7, n_4=20, n_1=25, n_5=29$ and  $n_3=38$-th columns. 
Denote the set of these indices by $N(m'):=\{n_0,\dotsc,n_5\}$. 
Note that the index starts from 0. These non-zero entries are represented in thick font in Fig.~\ref{134801_6Jul10}.
At each of these 6 columns in $N(m')$, $\hHC$ has $J=2$ non-zero entries at
\begin{align*}
\begin{array}{r@{}@{}r@{}r@{}r@{}r@{}r@{}r@{}r@{}r@{}r@{}r@{}r@{}r@{}r@{}r@{}r@{}r@{}}
\setlength{\arraycolsep}{0pt}
 (& 1, &2),~ &(& 5, &7),~ &( 2, &20),~ &(&1,  &25),~  &(&2,  &29),~&(&5,  &38), \\
 (&12, &2),~ &(&13, &7),~ &(11, &20),~ &(&13, &25),~  &(&12, &29),~&(&11, &38). 
\end{array}
\end{align*}
Let these positions be denoted, respectively, by 
\begin{align*}
\begin{array}{@{}l@{~}l@{~}l@{~}l@{~}l@{~}l@{ }}
  (m_0,  n_0), &(m_2,  n_2), &(m_4,  n_4), &(m_0, n_1),  & (m_4,  n_5),&(m_2,  n_3), \\
 (m_5, n_0), & (m_1, n_2),& (m_3, n_4), &(m_1, n_1),&  (m_5, n_5),&   (m_3, n_3). 
\end{array}
\end{align*}
We denote the set of these positions by $E(m')$. 
In the Tanner graph of $\hHC$, those non-zero entries from  $(m_0,n_0)$ to $(m_5,n_0)$ consist of a cycle of size $2L=12$. 
Along this cycle, the row index  moves back and forth between the above and below of $\hHC$. Indeed,  $0\le m_{2 i}< P\le m_{2 i+1}<2P$ for $i=0,\dotsc,L/2-1$. 
On the other hand, the column index moves back and forth between the left and right of $\hHC$. Indeed,  $0\le n_{2 i}<LP/2\le n_{2 i+1}<LP$ for $ i=0,\dotsc,L/2-1$.
\end{observation}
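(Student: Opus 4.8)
The plan is to turn the statement into a purely combinatorial fact about an auxiliary multigraph on the $L$ block columns, and then to read that graph off directly from the exponents in Theorem~\ref{Hagiwara-Imai}. The argument works verbatim for an arbitrary row of $\hHD$ (the concrete data $J=2,L=6,P=7,\sigma=2,\tau=3,m'=5$ of the Observation is substituted only at the end), and it never uses that $P$ is prime. Write $m'=j'P+r'$ with $0\le r'<P$ and $j'\in\{0,1\}$. Since each block $I(d_{j',\ell})$ of $\hHD$ is a permutation matrix, row $m'$ of $\hHD$ has exactly one nonzero entry per block column, located at global column $\ell P+s_\ell$ with $s_\ell:=(r'+d_{j',\ell})\bmod P$; hence $|N(m')|=L$. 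Because $J=2$, column $\ell P+s_\ell$ of $\hHC$ has exactly one nonzero entry per block row, at global row $jP+t_{j,\ell}$ with $t_{j,\ell}:=(r'+d_{j',\ell}-c_{j,\ell})\bmod P$. Thus $E(m')$ is the set of $2L$ positions $(jP+t_{j,\ell},\,\ell P+s_\ell)$ for $j\in\{0,1\}$, $0\le\ell<L$; two of them are joined in the Tanner graph of $\hHC$ through a common column iff they share the block column $\ell$, and through a common row iff they share $j$ and satisfy $t_{j,\ell}=t_{j,\ell'}$. I would then define a multigraph $G$ on the vertex set $\{0,\dots,L-1\}$ of block columns, with a $j$-labelled edge $\{\ell,\ell'\}$ for every pair $\ell\ne\ell'$ with $t_{j,\ell}=t_{j,\ell'}$; walking $E(m')$ in the Tanner graph of $\hHC$ then amounts to walking $G$, the columns of $\hHC$ playing the role of vertices and the rows in block row $j$ playing the role of $j$-labelled edges.

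The core step is to show that for each $j$ the $j$-labelled edges of $G$ form a perfect matching $\mathcal{M}_j$ that joins a ``left'' block column $(0\le\ell<L/2)$ to a ``right'' one $(L/2\le\ell<L)$. From the definitions of $c_{j,\ell}$ and $d_{j',\ell}$ one has $t_{j,\ell}=r'-\tau\sigma^{j'-\ell}-\sigma^{\ell-j}$ for $\ell<L/2$ and $t_{j,\ell}=r'-\sigma^{j'-\ell}-\tau\sigma^{\ell-j}$ for $\ell\ge L/2$. If $t_{j,\ell}=t_{j,\ell'}$ with $\ell\ne\ell'$ lying in the same half, then rearranging and cancelling $\sigma^{-\ell}-\sigma^{-\ell'}$ — which is a unit, because $\ell-\ell'\not\equiv0\pmod{\mathrm{ord}(\sigma)=L/2}$ and $1-\sigma^{k}\in\Z_P^{*}$ for $1\le k<\mathrm{ord}(\sigma)$ — forces $\tau=\sigma^{\pm(\ell+\ell'-j-j')}$, which contradicts that $\tau$ is not a power of $\sigma$; so $\ell\mapsto t_{j,\ell}$ is injective on each half. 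Conversely, for a left $\ell$ the right index $\ell'\in[L/2,L)$ with $\ell'\equiv j+j'-\ell\pmod{L/2}$ satisfies $t_{j,\ell}=t_{j,\ell'}$ by direct substitution, using $\sigma^{L/2}=1$; and by the injectivity just established it is the only right partner of $\ell$. Hence the $L/2$ values of $t_{j,\cdot}$ coming from the left half coincide with those coming from the right half, $\ell\mapsto t_{j,\ell}$ is exactly $2$-to-$1$, $\mathcal{M}_j$ is the claimed perfect matching, every one of its edges crosses from the left half to the right half, and $G=\mathcal{M}_0\cup\mathcal{M}_1$ is $2$-regular.

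Finally, I would check that $G$ is a single cycle of length $L$: every vertex lies on exactly one $\mathcal{M}_0$-edge and one $\mathcal{M}_1$-edge, so $G$ is a disjoint union of even cycles on which the labels $0$ and $1$ alternate, and $\mathcal{M}_1\circ\mathcal{M}_0$ maps the left half to itself; a one-line computation from the formula for $\mathcal{M}_j$ gives that $\mathcal{M}_1\circ\mathcal{M}_0$ acts on the left half as $\ell\mapsto(\ell+1)\bmod(L/2)$, an $(L/2)$-cycle, so $G$ is one cycle through all $L$ block columns. Lifting it back, $E(m')$ forms a single cycle of length $2L$ in the Tanner graph of $\hHC$; the alternation of labels makes consecutive rows along the cycle lie in block rows $0$ and $1$ alternately, i.e.\ $0\le m_{2i}<P\le m_{2i+1}<2P$, and the crossing property of the $G$-edges makes consecutive columns lie in the left and right halves alternately, i.e.\ $0\le n_{2i}<LP/2\le n_{2i+1}<LP$; relabelling the elements of $N(m')$ in cyclic order then gives precisely the asserted structure, and substituting the example's parameters reproduces the displayed indices. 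I expect the perfect-matching step to be the only real obstacle: it is the one place where all the hypotheses of Theorem~\ref{Hagiwara-Imai} ($\mathrm{ord}(\sigma)=L/2$, $\tau$ not a power of $\sigma$, $\tau\ne1$, and $1-\sigma^{k}\in\Z_P^{*}$) are genuinely used, and because $\Z_P$ need not be a field one must keep track of which quantities are units before cancelling.
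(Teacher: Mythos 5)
Your argument is correct: the formulas $s_\ell=(r'+d_{j',\ell})\bmod P$ and $t_{j,\ell}=(r'+d_{j',\ell}-c_{j,\ell})\bmod P$ are right, the same-half cancellation does force $\tau=\sigma^{\pm(\ell+\ell'-j-j')}$ (contradicting \eqref{215258_7Jul10}), the left--right partner $\ell'\equiv j+j'-\ell \pmod{L/2}$ checks out by substitution using $\sigma^{L/2}=1$, and the composition $\mathcal{M}_1\circ\mathcal{M}_0:\ell\mapsto\ell+1 \pmod{L/2}$ indeed yields a single cycle of length $2L$; specializing to $J=2,L=6,P=7,\sigma=2,\tau=3,m'=5$ reproduces exactly the listed positions. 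The paper takes a related but differently organized route: it treats the Observation as a worked instance and proves the general fact as Lemma~\ref{lemma:II.1}, whose proof writes down closed-form indices for the cycle directly, namely $n_{2i}=[-\tau\sigma^{-i}+m']_P+iP$ and the row indices $m_{2i}=[-\sigma^{i}-\tau\sigma^{-i}+m']_P$, $m_{2i-1}=[-\sigma^{i-1}-\tau\sigma^{-i}+m']_P+P$, verifies by substitution that these positions lie in $E(m')$, and then proves distinctness of the $m_k$ by the very same cancellation of $1-\sigma^{i'-i}$ (a unit by hypothesis) followed by the contradiction with $\tau\notin\{1,\sigma,\dots,\sigma^{\mathrm{ord}(\sigma)-1}\}$ that you use for injectivity within each half. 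So the algebraic heart and the use of the hypotheses of Theorem~\ref{Hagiwara-Imai} coincide; what differs is the decomposition. Your auxiliary multigraph on block columns separates the statement into ``each check row of block row $j$ meets $N(m')$ in $0$ or $2$ columns, always crossing the left/right halves'' (the perfect-matching lemma) and ``the union of the two matchings is one cycle,'' the latter needing the extra shift computation; the paper avoids that extra step because its explicit indexing by $i$ already orders the cycle, leaving only distinctness to verify, and its closed-form $m_k$ also make the split of $E(m')$ into $E_1(m')$ and $E_2(m')$ used in Section~\ref{subsection:nb} immediate. Your version is somewhat more structural and makes it transparent why one gets a single cycle rather than several; either way, your proof in fact establishes Lemma~\ref{lemma:II.1} in full generality, not just the Observation.
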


We claim that this observation is general for any $m'$-th row of $\hat{D}_C$. 
The following lemma will be a key ingredient for constructing non-binary matrices $\HD$ and $\HG$ in Section \ref{subsection:nb}.
\begin{lemma}\label{lemma:II.1}
Let $\hHC, \hHD$ be 
the two (2, $L$, $P$)-QC binary matrices dealt in Theorem \ref{Hagiwara-Imai}.
Let $N{(m')} := \{n_{0}, \dotsc, n_{L-1} \}$ be the support of the $m'$-th row of $H_D$. 
To be precise, 
\begin{align*}
 N{(m')} = \{n_{0}, \dotsc, n_{L-1}\}=\{0\le n<LP\mid \hat{d}_{m', n}\neq 0\}.
\end{align*}
Let $E{(m')}$ be the set of non-zero entry positions in $\hHC$ whose column is in $N{(m')} $. 
To be precise, 
\begin{align}
 E{(m')}:=\{(m, n)\mid \hat{c}_{m, n}\neq 0, n\in N{(m')}\}.
\end{align}
In this setting, in the Tanner graph of $\hHC$, for any $m'=0, \dotsc, L-1$, 
the $L$ variable nodes corresponding to the column index in $N{(m')}$ 
and the $L$ adjacent check nodes form a cycle of length $2L$.
In other words, 
there exist $L$ distinct $m_0, \dotsc, m_{L-1}$ and $L$ distinct $n_0, \dotsc, n_{L-1}$, such that 
\begin{align}
\begin{split}
 &  \{(m_{2 i-1}, n_{2 i}), (m_{2 i}, n_{2 i} ), (m_{2 i}, n_{2 i+1} ), (m_{2 i+1}, n_{2 i+1} )\\
 &\mid 0\le i<L/2\}= E{(m')},\label{104307_12Jun10}
\end{split}
\end{align}
where we denote $m_{-1} := m_{L-1}$ and
$m_{L+1}:= m_{1}$. 
\end{lemma}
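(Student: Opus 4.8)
The plan is to make the cyclic structure explicit by tracking column and row indices through the block structure of the QC matrices. Write each column index $n\in\{0,\dots,LP-1\}$ as $n = \ell P + u$ with $0\le\ell<L$ (the block column) and $0\le u<P$ (the position inside the circulant); similarly write each row index $m = j P + v$ with $0\le j<J=2$. With this notation, the support $N(m')$ of row $m'=j'P+v'$ of $\hHD$ consists of exactly one column in each block $\ell=0,\dots,L-1$, namely the column $n_\ell = \ell P + (v' + d_{j',\ell} \bmod P)$, since $I(d_{j',\ell})$ is the permutation matrix sending position $k$ to $k+d_{j',\ell}$. So $N(m')$ meets every block column exactly once, giving the $L$ distinct indices $n_0,\dots,n_{L-1}$; I will reorder them so that $n_\ell$ lies in block $\ell$.

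Next I would compute $E(m')$. Fix a column $n_\ell = \ell P + u_\ell$ with $u_\ell = v' + d_{j',\ell}$. The nonzero entries of $\hHC$ in that column come from the two blocks $j=0,1$: in block $j$ the circulant $I(c_{j,\ell})$ has its nonzero entry of column $u_\ell$ in row $u_\ell - c_{j,\ell} \bmod P$, i.e. at global row $m = jP + (u_\ell - c_{j,\ell}\bmod P)$. Thus $E(m')$ has exactly $2L$ elements, two per block column. The key computation is to substitute the explicit formulas \eqref{051947_15Jun10}–\eqref{051527_15Jun10} for $c_{j,\ell}$ and $d_{j',\ell}$ and read off the row positions. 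Using $L/2=\mathrm{ord}(\sigma)$ and treating the two halves $0\le\ell<L/2$ and $L/2\le\ell<L$ separately, the inner-position offset $u_\ell - c_{j,\ell}$ telescopes: going from block $\ell$ to block $\ell+1$ in one of the two rows keeps the \emph{row} fixed while the column advances, and switching rows keeps the \emph{column} fixed while the row advances. Concretely, I expect to show that the row index $u_\ell - c_{0,\ell}$ computed at block $\ell$ equals the row index $u_{\ell+1}-c_{1,\ell+1}$ computed at block $\ell+1$ (and symmetrically with the roles of $j=0,1$ swapped), which is exactly the incidence pattern $\{(m_{2i-1},n_{2i}),(m_{2i},n_{2i}),(m_{2i},n_{2i+1}),(m_{2i+1},n_{2i+1})\}$ asserted in \eqref{104307_12Jun10}. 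The parameter $\tau$ enters only at the transition $\ell=L/2-1\to L/2$ and at the wrap-around $\ell=L-1\to 0$; here I use $\mathrm{ord}(\sigma)=L/2$ so that $\sigma^{L/2}=1$, and the definition $m_{-1}:=m_{L-1}$, $m_{L+1}:=m_1$ makes the chain close up into a single cycle rather than two shorter ones.

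Finally I would check that the $m_i$ are distinct and the $n_i$ are distinct, so that the closed walk of length $2L$ is a genuine cycle. Distinctness of the $n_i$ is immediate since they lie in distinct block columns. For the $m_i$: the even-indexed $m_{2i}$ and odd-indexed $m_{2i+1}$ automatically differ because, by the observation preceding the lemma, the even ones lie in block $j=0$ and the odd ones in block $j=1$ (this is where conditions \eqref{164916_7Jul10} and \eqref{215258_7Jul10}, guaranteeing $\sigma,\tau\in\mathbb{Z}_P^*$ and $\tau\notin\langle\sigma\rangle$, are used — they prevent a row in one half from coinciding with a row in the other half). Within a single block, distinctness of the $L/2$ values reduces to the injectivity of $\ell\mapsto \sigma^{\ell}$ on $\{0,\dots,L/2-1\}$, which holds precisely because $L/2=\mathrm{ord}(\sigma)$, together with the condition $1-\sigma^j\in\mathbb{Z}_P^*$ that already underlies the girth-$6$ claim of Theorem \ref{Hagiwara-Imai}. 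The main obstacle is the bookkeeping in the middle step: correctly handling the two halves of the block range and the two wrap-around transitions (at $\ell=L/2$ and $\ell=L$) so that the factor $\tau$ cancels in the right places and the whole incidence set assembles into one cycle of length exactly $2L$; once the index algebra is set up carefully this is a finite case check, but it is easy to get the parity of the row/column alternation or the $\tau$-bookkeeping wrong.
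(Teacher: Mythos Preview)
Your overall framework---block decomposition, explicit row computation, then distinctness---matches the paper's, but the central adjacency claim is wrong, and this is not just bookkeeping. You order the support columns so that $n_\ell$ lies in block~$\ell$ and then expect the row hit by $c_{0,\ell}$ at block~$\ell$ to coincide with the row hit by $c_{1,\ell+1}$ at block~$\ell+1$. A direct check with $0\le\ell<L/2$ and $j'=0$ gives $u_\ell-c_{0,\ell}=v'-\tau\sigma^{-\ell}-\sigma^{\ell}$ while $u_{\ell+1}-c_{1,\ell+1}=v'-\tau\sigma^{-\ell-1}-\sigma^{\ell}$; these differ unless $\sigma=1$. (In the running example, block~0 gives upper-row position~1 and block~1 gives lower-row position~6, not~1.) Worse, if your claim and its ``symmetric'' version both held at the same transition $\ell\to\ell+1$, the two columns and two rows would form a 4-cycle, contradicting the girth-6 property of Theorem~\ref{Hagiwara-Imai}.

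The structure the paper actually uncovers is different: the cycle does \emph{not} walk through consecutive block columns. Instead it alternates between the left half ($0\le\ell<L/2$) and the right half ($L/2\le\ell<L$). In the paper's indexing, $n_{2i}$ lies in block~$i$ while $n_{2i+1}$ lies in block $[-i]_{L/2}+L/2$, and the shared rows are
\[
m_{2i}=[-\sigma^{i}-\tau\sigma^{-i}+m']_P,\qquad
m_{2i-1}=[-\sigma^{i-1}-\tau\sigma^{-i}+m']_P+P.
\]
With this left/right alternation the $\tau$ factor is present on \emph{both} sides of each row-matching equation (once from $d_{0,\cdot}$ on the left half, once from $c_{\cdot,\ell}$ on the right half), so it cancels cleanly at every step rather than only at the two ``wrap-around'' transitions you anticipated. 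Your distinctness argument is essentially right in spirit, but it too needs this corrected indexing: showing $m_{2i}\ne m_{2i'}$ reduces to $(1-\sigma^{i'-i})(\sigma^{i}-\tau\sigma^{-i'})\equiv 0$, which is where conditions \eqref{211218_6Jul10} and \eqref{215258_7Jul10} are invoked, not merely the injectivity of $\ell\mapsto\sigma^\ell$.
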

{\itshape Proof}:
Since $J=2$, it follows that $\hHC$ and $\hHD$ comprise $2\times L$ sub-matrices of size $P\times P$. 
For simplicity, we focus on the $m'$-th row chosen from the upper half rows of $\hHD$, i.e., $0\le m'<P$. 
The proof for $P\le m'<2P$ is essentially the same. 

First, we will clarify the support $N(m')$ of the $m'$-th row of $\hHD$. 
From \eqref{051527_15Jun10}, the upper half of $\hHD$ can be written by $L$ sub-matrices as follows. 
\begin{align*}
 \begin{pmatrix}
I(-\tau \sigma^{-0})\cdots I(-\tau \sigma^{-(L/2-1)})||  I(- \sigma^{-L/2} )\cdots I(-\sigma^{-(L-1)})
 \end{pmatrix}. 
\end{align*}
For each sub-matrix, the $m'$-th row has the only one non-zero entry. 
For $0\le i<L/2$, let $n_{2 i}$ be the column index of such a non-zero entry in the $ i$-th sub-matrix. 
Similarly, let $n_{2 i+1}$ be the column index of such a non-zero entry in the $([- i]_{L/2}+L/2)$-th sub-matrix. 
Obviously, $N{(m')} = \{n_{0}, \dotsc, n_{L-1}\}$ and $n_0,\dotsc,n_{L-1}$ are distinct.
Note that the $n_{3}$-th, $\dotsc$,  $n_{L-3}$-th and $n_1$-th columns are in the $L/2$ right half $(L-1)$-th, $\dotsc,(L/2+1)$-th and $(L/2)$-th sub-matrices of $\hHD$, respectively. 
The $m'$-th row of $I(x)$ has the only non-zero entry at the $(x+m')$-th column. 
From this,  it follows that we can rewrite 
\begin{align}
   n_{2 i } &= [ -\tau \sigma^{-  i } + m' ]_P +   i P,\label{224828_14Mar11}\\
 n_{2 i+1 } &= [ - \sigma^{[- i]_{L/2}} + m' ]_P + ([- i]_{L/2}+L/2) P, 
\end{align}
where we define $[x]_t\in \mathbb{Z}$ for $ x \in \mathbb{Z}$ as $0\le [x]_t <t$ such that  $[x]_t = x \ \pmod{t}$. 


\begin{figure*}[t]
\begin{center}
 \begin{tabular}{c|c}
  \includegraphics[width=0.45\textwidth, height=0.25\textwidth]{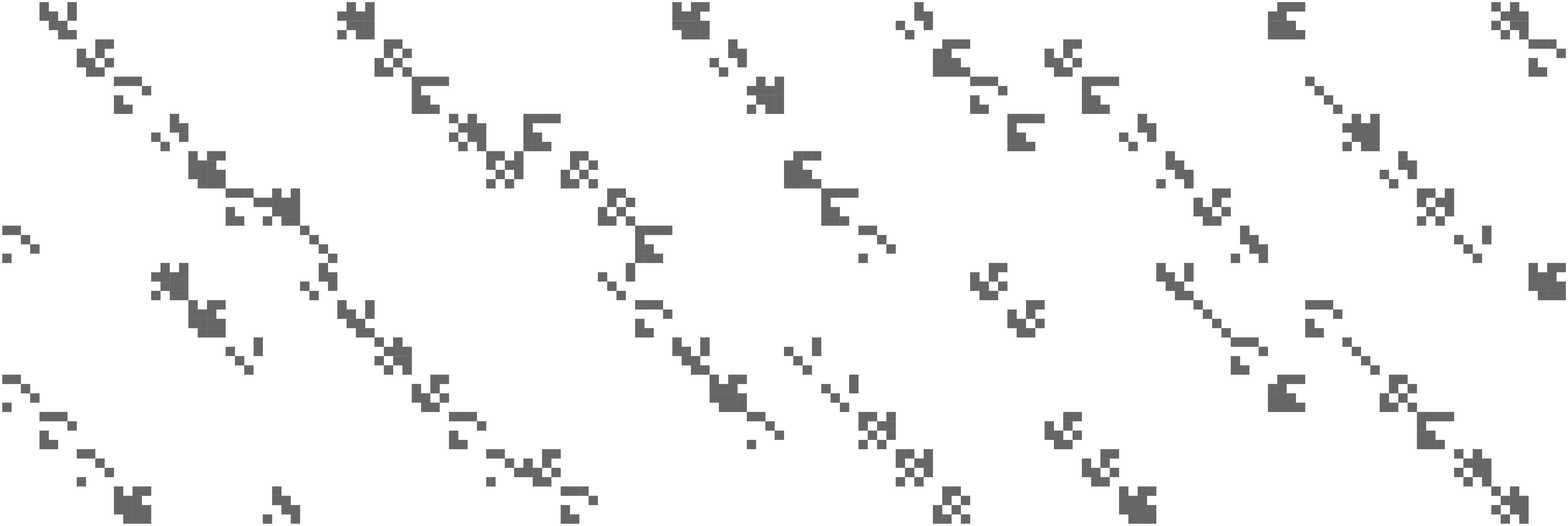} 
 & \includegraphics[width=0.45\textwidth, height=0.25\textwidth]{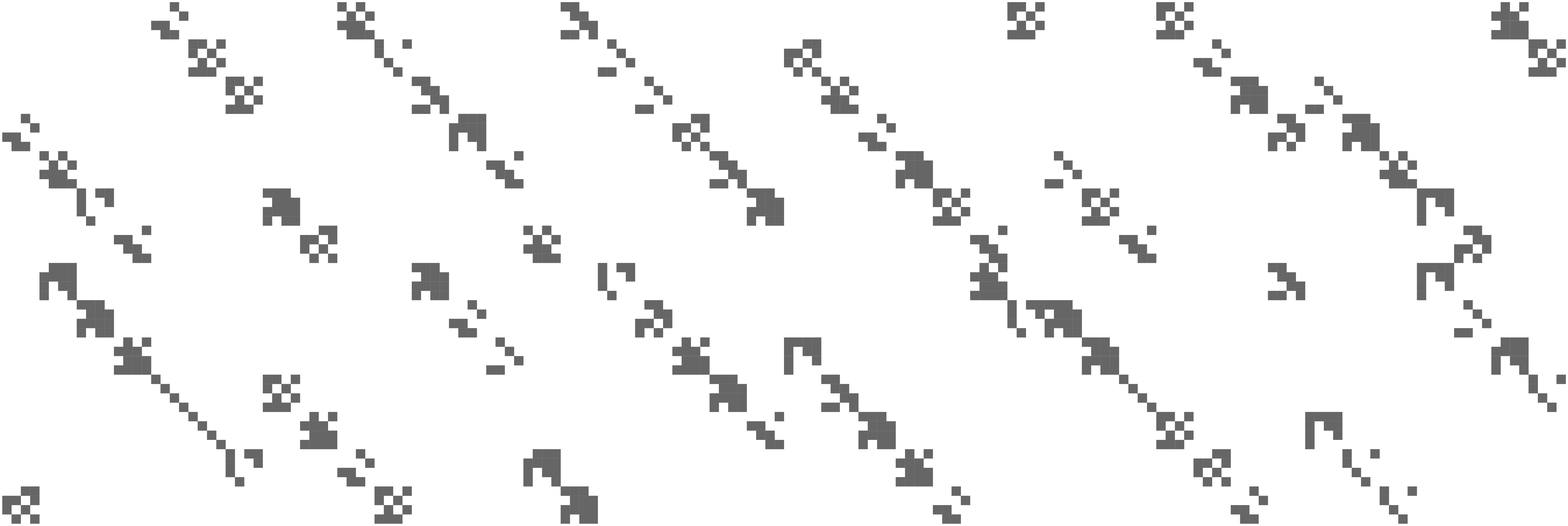} 
 \end{tabular} 
\end{center}
 \caption{An example of binary $pM\times pN$ matrices $H_C$ and $H_D$ such that $H_CH_D^{\mathsf{T}}=0$ with $p=4, M=14$ and $N=42$. 
Non-zero entries are represented in black.
The codes have many cycles of size 4 as binary codes. On the other hand, the codes have no cycles of size 4 as non-binary codes.}
 \label{113821_28Jun10}
\end{figure*}
Secondly, we will prove \eqref{104307_12Jun10} by clarifying the structure of $E(m')$. 
Define
\begin{align*}
 m_{2 i}:&=[ - \sigma^{ i }  - \tau \sigma^{- i} + m']_P,\\ 
 m_{2 i-1}:&=[- \sigma^{ i-1 }  - \tau \sigma^{- i}+ m']_P + P. 
\end{align*}
We claim that $(m_{2 i-1}, n_{2 i}), (m_{2 i}, n_{2 i} ) \in E{(m')}$ for $0\le i<L/2$. 
Consider the non-zero entry in the $(j,  i)$-th sub-matrix for $j=0,1,$ and $0\le i<L/2$.
Let  the position be denoted by  $(m,n_{2  i})$. 
From \eqref{051947_15Jun10} and  $ i<L/2$, it follows that 
the $m$-th row  of $\hHC$ has non-zero entries at the $([\sigma^{-j+ i}+m]_P + i P)$-th column. 
Therefore, it follows from \eqref{224828_14Mar11} that $(m, n_{2 i} ) \in E{(m')}$ if and only if 
\begin{align*}
   [\sigma^{-j+ i}+m]_P + i P&=[ - \tau\sigma^{- i} + m' ]_P +  i P, 
\end{align*}
for $j=0, 1$. 
Thus, we conclude that $m=m_{2 i-1}$  and $m=m_{2 i}$ for $j=0,1$, respectively, which proves the claim.
In a similar manner, it can be shown that  
$(m_{2 i}, n_{2 i+1} ),(m_{2 i+1}, n_{2 i+1} ) \in E{(m')}$ for $0\le i<L/2$, where we denoted $m_{L+1}:= m_{1}$. 
These prove \eqref{104307_12Jun10}.

Finally, we show that the $2L$ non-zero entries in $E{(m')}$  consist of a cycle of length $2L$.
To this end, it is suffice to show that 
$m_0,m_2,\dotsc,m_{L-2}$ are distinct, and 
$m_1,m_3,\dotsc,m_{L-1}$ are distinct. Assume $m_{2 i}=m_{2 i'}$ for $ i\neq i'$. Then we have 
$\sigma^{ i }  + \tau \sigma^{- i} +m'= \sigma^{ i' }  +\tau \sigma^{- i'}+m' \pmod P$. 
Some calculations reveal that $(1 - \sigma^{ i'- i})(\sigma^{ i } - \tau \sigma^{- i'})=0 \pmod P.$
From \eqref{211218_6Jul10} it follows that $\sigma^{ i } - \tau \sigma^{- i'}=0 \pmod P$, which contradicts with \eqref{215258_7Jul10}. Hence, we conclude $ i= i'$.
In the same manner, we can show $m_1,m_3,\dotsc,m_{L-1}$ are distinct.
\qed
\subsection{Non-binary matrix pair construction}
\label{subsection:nb}
Define $M:=JP$ and $N:=LP$. 
So far, we obtain orthogonal matrices 
$\hHC=(\hat{c}_{m, n})_{0\le m < M, 0\le n < N}$ and $\hHD=(\hat{d}_{m, n})_{0\le m < M, 0\le n< N}$ whose Tanner graphs are free of cycles of size 4. 
In this sub-section, we will construct orthogonal non-binary $M\times N$ matrices $ \HG=(\gamma_{m, n})_{0\le m<M, 0\le n<N}$ and $\HD=(\delta_{m, n})_{0\le m<M, 0\le n<N}$ over $\GF(2^p)$  such that $\gamma_{m, n}\neq 0$ iff $\hat{c}_{m, n}\neq 0$ and $\delta_{m, n}\neq 0$ iff $\hat{d}_{m, n}\neq 0$.
Obviously, the Tanner graphs of $H_{\Gamma}$ and $H_{\Delta}$ are free of cycles of size 4. 
We will determine the non-zero entries of $H_{\Gamma}$ and $H_{\Delta}$ such that $H_{\Gamma} H_{\Delta}^{\mathsf{T}} = 0$ in the rest of this sub-section.

For $\HG \HD^{\mathsf{T}}=0$, it is required that  the $m'$-th row of $\HD$ is in the null-space of $\HG$ for each $0\le m'<JP$. 
From Lemma \ref{lemma:II.1}, this is equivalent to 
\begin{align}
& 
\begin{bmatrix}
  \gamma_{m_0, n_0}       & \hspace{-2mm}\gamma_{m_0, n_1} & \hspace{-2mm}                          &\hspace{-2mm}\label{100120_12Jun10}                                    \\ 
                          & \hspace{-2mm}\ddots            & \hspace{-2mm}\ddots                    &\hspace{-2mm}                                    \\
                          & \hspace{-2mm}                  & \hspace{-2mm}\gamma_{m_{L-2}, n_{L-2}} &\hspace{-2mm} \gamma_{m_{L-2}, n_{L-1}} \\
 \gamma_{m_{L-1}, n_{0}} & \hspace{-2mm}                  & \hspace{-2mm}                          &\hspace{-2mm} \gamma_{m_{L-1}, n_{L-1}}
\end{bmatrix}
\begin{bmatrix}
 \delta_{m', n_0}\\\vdots\\\vdots\\\delta_{m', n_{L-1}}
\end{bmatrix}=0,
\end{align}
for each $0\le m'<JP$. 
In order to find the non-zero entries of $\HG$ and $\HD$, 
this equation needs to have non-trivial solutions, i.e., the determinant of the left matrix, denoted by $\Gamma_{m'}$, in \eqref{100120_12Jun10} is 0:
\begin{align}
 \mathrm{det}( \Gamma_{m'} ) = \gamma_{m_0, n_0}\cdots\gamma_{m_{L-1}, n_{L-1}} - \gamma_{m_0, n_1}\cdots\gamma_{m_{L-1}, n_{0}}=0.\label{104959_12Jun10}
\end{align}
Divide $E{(m')}$ in \eqref{104307_12Jun10} into two parts as in the proof of Lemma \ref{lemma:II.1}: 
\begin{align*}
& E{(m')}=E_1{(m')}\cup E_2{(m')}, \\
&E_1{(m')}:= \{(m_0, n_0), (m_1, n_1), \dotsc, (m_{L-1}, n_{L-1})\}, \\
&E_2{(m')}:= \{(m_0, n_1), (m_1, n_2), \dotsc, (m_{L-1}, n_{0})\}.
\end{align*}
Then \eqref{104959_12Jun10} can be transformed to
\begin{align}
 \prod_{(m, n)\in E_1{(m')}}\gamma_{m, n}\prod_{(m, n)\in E_2{(m')}}\gamma_{m, n}^{-1}=1.
\end{align}
For $\alpha^x\in \GF(2^p)$, define $\log_\alpha(\alpha^x):=x \pmod {2^{p}-1 } $. Then $\log_\alpha$ is well-defined. The equation above is equivalent to the following linear equation  over $\Z_{2^p-1}$. 
\begin{align}
 \sum_{(m, n)\in E_1{(m')}}\log_\alpha\gamma_{m, n}-\sum_{(m, n)\in E_2{(m')}}\log_\alpha\gamma_{m, n}=0.\label{032326_28Jun10}
\end{align}
Thus, we have $JP$ linear equations over $\Z_{2^p -1}$ for $m'=0, \dotsc, JP-1$. 
Solving these linear equations by the Gaussian elimination, we get the candidate solution space of the non-zero entries of $\HG$ such that \eqref{032326_28Jun10} holds for  $m'=0, \dotsc, JP-1$.
Picking non-zero entries of $\HG$ randomly from the candidate solution space and solving \eqref{100120_12Jun10}, we obtain non-zero entries of $\HD$.
We give an example. 
\begin{example}
\label{ex2}
Using $\hHC$ and $\hHD$ given in Example \ref{ex1}, we get an
$M\times N$ non-binary matrix pair $(\HG, \HD)$ over $\GF(2^p)$ such that $\HG\HD^{\mathsf{T}}=0$ with $M=JP=14$ and $N=LP=42$.
The resulting $(\HG, \HD)$ is depicted in Fig.~\ref{130803_26Jun10}.
\end{example}
This construction can be viewed as picking $(H_{\Gamma}, H_{\Delta})$ randomly from $\{(H_{\Gamma}, H_{\Delta})\mid H_{\Gamma}H_{\Delta}^{\mathsf{T}}=0\}$, where  $H_{\Gamma}$ and $H_{\Delta}$ are constrained 
to have non-zero entries at the same positions as $\hHC$ and $\hHD$, respectively. 
Since $\hHC$ and $\hHD$ is equivalent with some column permutation \cite{4557323}, the construction has symmetry for $H_{\Gamma}$ and $H_{\Delta}$. 
This symmetry leads to almost the same decoding performance which will be observed by computer experiments in Section \ref{231523_10Jun10}. 
\subsection{Binary Quasi-Cyclic CSS LDPC Codes}
\label{223910_10Jun10}
So far, we obtained $M\times N$ sparse non-binary $\GF(2^p)$ parity-check matrices $\HD$ and $\HG$, where  $N:=PL$ and $M:=PJ$. 
It is known that non-binary codes have the binary representation of their parity-check matrices. 
In this section, we show that two parity-check matrices $\HG$ and $\HD$ over $\GF(2^p)$ such that $\HG \HD^{\mathsf{T}}=0$ can be represented by two binary matrices 
$H_C$ and $H_D$ such that $H_CH_D^{\mathsf{T}}=0$. 

Let $\GF(2^p)$ has a primitive element $\alpha$ with its primitive polynomial $\pi(x)=\sum_{i=0}^{p-1}\pi_ix^i+x^p$. 
It is known \cite{macwilliams77} that the following map $A$ from $\GF(2^p)$ to $\GF(2)^{p\times p}$ is bijective and its image is isomorphic to $\GF(2^p )$ as a field by sum and multiple as matrices. 
  \begin{align*}
 & \GF(2^p)\ni \alpha^i \mapsto A({\alpha^i}):=A(\alpha)^i\in \GF(2)^{p\times p}, \\
 &A(0)=0, \\
&A(\alpha):=\begin{bmatrix}
      0 & 0 & 0 & 0 & \pi_0\\
      1 & 0 & 0 & 0 & \pi_1\\
      0 & 1 & 0 & 0 & \pi_2\\
      \vdots & \vdots & \ddots & \vdots\\
      0 & 0 & 0 & 1 & \pi_{p-1}
	    \end{bmatrix}.
 \end{align*}
Moreover, it holds that 
 \begin{align*}
 &A({\alpha^i})\underline{v}(\alpha^j)=\underline{v}(\alpha^{i+j}), \\
 & \text{where } \alpha^i=\textstyle\sum_{j=0}^{p-1}a_j\alpha^j\in\GF(2^p), \\ 
 & \text{and } \underline{v}(\alpha^i) := (a_0, \dotsc, a_{m-1})^\mathsf{T}\in \GF(2)^p.\\
 \end{align*}
Furthermore, with an abuse of notation we define $A(\underline{v}(\alpha^j)):=\underline{v}( \alpha^j )$. 
\begin{fact}
 Let $\HG$ and $\HD$ be matrices over $\GF(2^{p})^{M \times N}$ and 
 let ${H}_C$ and ${H}_D$ be two matrices over $\GF( 2 )^{pM \times pN }$ such that 
 \begin{align*}
 & \HG=(\gamma_{m, n})_{0\le m<M, 0\le n<N}, \\
 &  \HD=(\delta_{m, n})_{0\le m<M, 0\le n<N}, \\
 & {{H}}_C=(A(\gamma_{m, n}))_{0\le m<M, 0\le n<N}, \\
 &  {{H}}_D=(A^{\mathsf{T}}(\delta_{m, n}))_{0\le m<M, 0\le n<N}.
 \end{align*}
Then, it holds that if $\HG \HD^\mathsf{T}=0$, then ${H}_C{H}_D^\mathsf{T}=0$. 
\end{fact}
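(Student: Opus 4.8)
The plan is to exploit the property recalled just above from \cite{macwilliams77}: the map $A$ is a field homomorphism onto its image, i.e. $A(xy)=A(x)A(y)$ and $A(x+y)=A(x)+A(y)$ for all $x,y\in\GF(2^p)$, together with $A(0)=0$. Once this is in hand, the statement reduces to tracking block indices through a single transpose, and no further computation is needed.

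First I would regard $H_C$ and $H_D$ as $M\times N$ arrays of $p\times p$ blocks, the $(m,n)$ block of $H_C$ being $A(\gamma_{m,n})$ and the $(m,n)$ block of $H_D$ being $A(\delta_{m,n})^{\mathsf{T}}$. Transposing a block matrix transposes each block and swaps block positions, so the $(n,m)$ block of $H_D^{\mathsf{T}}$ is $\bigl(A(\delta_{m,n})^{\mathsf{T}}\bigr)^{\mathsf{T}}=A(\delta_{m,n})$. Consequently the $(m,m')$ block of the product is
\begin{align*}
 \bigl(H_C H_D^{\mathsf{T}}\bigr)_{m,m'}=\sum_{n=0}^{N-1}A(\gamma_{m,n})\,A(\delta_{m',n}).
\end{align*}

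Next I would push the homomorphism through the sum: by multiplicativity each summand equals $A(\gamma_{m,n}\delta_{m',n})$, and by additivity the whole sum equals $A\bigl(\sum_{n=0}^{N-1}\gamma_{m,n}\delta_{m',n}\bigr)$. The argument of $A$ here is exactly the $(m,m')$ entry of $\HG\HD^{\mathsf{T}}$, which is $0$ by hypothesis; since $A(0)=0$, the $(m,m')$ block of $H_C H_D^{\mathsf{T}}$ is the zero $p\times p$ block. As $m,m'$ were arbitrary, $H_C H_D^{\mathsf{T}}=0$.

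The only delicate point — and the step I would write out most carefully — is the block-transpose bookkeeping: one must verify that the extra transpose built into the definition $H_D=(A^{\mathsf{T}}(\delta_{m,n}))$ is precisely what makes the blocks of $H_D^{\mathsf{T}}$ appear untransposed and paired with the matching column index $n$ of $H_C$, so that the inner sum over $n$ lines up with the matrix product $\HG\HD^{\mathsf{T}}$. Everything else is an immediate consequence of $A$ respecting sums and products.
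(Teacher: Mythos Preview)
Your argument is correct and is essentially the same as the paper's: compute the $(m,m')$ block of $H_C H_D^{\mathsf{T}}$ as $\sum_{n} A(\gamma_{m,n})A(\delta_{m',n})$, then use multiplicativity and additivity of $A$ to rewrite this as $A\bigl((\HG\HD^{\mathsf{T}})_{m,m'}\bigr)=A(0)=0$. Your explicit treatment of the block transpose---showing that the built-in transpose in $H_D=(A^{\mathsf{T}}(\delta_{m,n}))$ is exactly what makes the blocks of $H_D^{\mathsf{T}}$ come out as $A(\delta_{m',n})$---is in fact more careful than the paper, which leaves that bookkeeping implicit.
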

\begin{proof}
 Let $({H}_C{H}_D^\mathsf{T})_{m, n}$ be the $(m, n)$-th $p\times p$ binary sub-matrix of ${H}_C{H}_D^\mathsf{T}$, and
let $(\HG \HD^\mathsf{T})_{m, n}$ be the $(m, n)$-th entry of $\HG \HD^\mathsf{T}$. 
 Then, for any $0\le m<M$ and $0\le n<N$, 
 \begin{align*}
  ({H}_C{H}_D^\mathsf{T})_{m, n}&=\textstyle\sum_{k=0}^{N-1}A(\gamma_{m, k})A(\delta_{k, n})\\
 &=\textstyle\sum_{k=0}^{N-1}A(\gamma_{m, k}\delta_{n, k})\\
 &=\textstyle A(\sum_{k=0}^{N-1}\gamma_{m, k}\delta_{n, k})\\
 &=A((\HG \HD^\mathsf{T})_{m, n})=A(0)=0. 
 \end{align*}
\end{proof}
\begin{example}
Using $\HG$ and $\HD$ given in Example \ref{ex2}, we get a
$pM\times pN$ binary matrix pair $({H}_C, {H}_D)$
 such that ${H}_CH_D^{\mathsf{T}}=0$ with $p=4, pM=pJP=56$ and $pN=pLP=168$.
The resulting $({H}_C, {H}_D)$ is depicted in Fig.~\ref{113821_28Jun10}.
\end{example}
\section{Decoding Algorithm}
\label{231512_10Jun10}
In this section, we describe the decoding algorithm for the CSS code pair $(C, D)$ constructed by the proposed method in Section \ref{230852_10Jun10} and \ref{223910_10Jun10}. 
The decoding algorithm is based on the decoding algorithm of classical non-binary LDPC codes \cite{706440}. 
The input of the decoding algorithm is the syndrome. 
We assume the depolarizing channels \cite[Section V]{1337106} with depolarizing probability $2f_\mathrm{m}/3$, where $f_\mathrm{m}$ can be viewed as the marginal probability for 
$\mathtt{X}$ and $\mathtt{Z}$ errors. 

Let $M\times N$ be the size of the non-binary parity-check matrix $\HG$ over $\GF(2^p)$. 
The code length is $pN$ qubits. 
We deal with a $p$-bit sequence as a non-binary symbol which is simply referred to as symbol. 
Moreover, we deal with the symbol interchangeably as a symbol in $\GF(2^p)$. 

Note that the channel is the normal depolarizing channel. 
We assume the decoder knows the depolarizing probability $3f_\mathrm{m}/2$. 
For each row $m=1, \dotsc, M$  in $\HG$, 
let $N_m$ be the set of the non-zero entry indices in the $m$-th row. To be precise, $N_m:=\{n\mid \gamma_{m, n}\neq 0\}$. 
The decoder is given the syndrome symbols $\underline{s}_m\in \GF(2)^p$ for $m=1, \dotsc, M.$
To be precise, the decoder does not know the flipped qubits but their syndromes:
\begin{align}
 &\underline{s}_m=\sum_{n\in N_m}A(\gamma_{m, n})\underline{y}_n,\label{221113_6Jul10} 
\end{align} 
where $A$ is the isomorphism defined in Section \ref{223910_10Jun10} and $\underline{y}_n\in\GF(2)^p$ is a $p$-bit sequence corresponding to the $n$-th $p$-qubit sequence of flipped $pN$ qubits.

For simplicity, we concentrate on the decoding algorithm for $C$, since the decoding algorithm for $D$ is given 
by replacing $\Gamma$ with $\Delta$, and $A(\cdot)$ with $A^\mathsf{T}(\cdot)$ in the following algorithm. 

\noindent{\bf The decoding algorithm of $C$}
\\\initialization:\\
For each column $n=1, \dotsc, N$  in $\HG$, 
let $M_n$ be the set of the non-zero entry indices in the $n$-th column. To be precise, $M_n:=\{m\mid \gamma_{m, n}\neq 0\}$. 
For each column $n$ in $\HG$ for $n=1, \dotsc, N$, calculate the initial probability $p_{n}^{(0)}(\underline{e})$ as follows. 
 \begin{align*}
  p_{n}^{(0)}(\underline{e})&= \Pr(\underline{e}_{n}=\underline{e}|\underline{Y}_{n}=\underline{0})=f_\mathrm{m}^{W_\mathrm{H}(\underline{e})}(1-f_\mathrm{m})^{p-W_\mathrm{H}(\underline{e})}\label{181252_11Jan10}
 \end{align*} 
 for  $\underline{e}\in \GF(2)^p$, where $f_\mathrm{m}$ is the flip probability of the channel and $W_\mathrm{H}(\underline{e})$ is the Hamming weight of $\underline{e}$. 
 For each column $n=1, \dotsc, N$ in $\HG$, copy the initial message $p_{nm}^{(0)}=p_{n}^{(0)}\in [0, 1]^{2^p}$ for $m\in M_n$. 
 Set the iteration round as $\ell:=0$. 
\\\\\ctov:\\ 
Each row $m$ has $L$ incoming messages $p_{nm}^{(\ell)}$ for $v\in N_m$.
The $m$-th row sends the following message ${q}^{(\ell+1)}_{mn}\in[0, 1]^{2^p}$ to each column $n\in N_m$. 
\begin{align}
&\tilde{p}^{(\ell)}_{nm}(\underline{e}) = {p}^{(\ell)}_{nm}(A(\gamma_{nm}^{-1})\underline{e}) \text{ for $\underline{e}\in\GF(2)^p$}, \\
&\tilde{q}^{(\ell+1)}_{mn}= \bm{1}_{\underline{s}_m}\bigotimes_{n'\in N_m\backslash\{n\}}\tilde{p}^{(\ell)}_{n'm}, \nonumber\\
&{q}^{(\ell+1)}_{mn}(\underline{e}) = \tilde{q}^{(\ell+1)}_{mn}(A(\gamma_{nm}) \underline{e})\text{ for $\underline{e}\in\GF(2)^p$}.
 \end{align} 
where 
$\bm{1}_{\underline{s}_m}$ is a probability on  $\GF(2)^{p}$ such that $\bm{1}_{\underline{s}_m}(\underline{e})=1$ for $\underline{e}=\underline{s}_m$ and 0 otherwise, and 
$q_1\otimes q_2\in[0, 1]^{2^p}$ is a convolution of $q_1\in[0, 1]^{2^p}$ and $q_2\in[0, 1]^{2^p}$. To be precise, 
\begin{equation*}
  (q_1\otimes q_2)(\underline{e}) = \sum_{\substack{\underline{f}, \underline{g}\in\GF(2)^p\\\underline{e}=\underline{f}+\underline{g}}}{q_1(\underline{f})q_2(\underline{g})} \text{ for $\underline{e}\in\GF(2)^p$}.
\end{equation*}
The convolutions are efficiently calculated via FFT and IFFT \cite{4155118}, \cite{RaU05/LTHC}.
Increment the iteration round as $\ell:=\ell+1$. 
\\
\\\vtoc: \\
Each column $n=1, \dotsc, N$ in $\HD$ has $J=2$ non-zero entries.
Let $M_n$ be the set of the column indices of the non-zero entry.
The message $p^{(\ell)}_{nm}\in [0, 1]^{2^p}$ sent from $n$ to $m\in M_n$ is given by
\begin{align*}
p^{(\ell)}_{nm}(\underline{e}) = \xi q_n^{(0)}(\underline{e})\prod_{m'\in M_n\backslash\{m\}}q^{(\ell)}_{m'n}(\underline{e}) \text{ for $x\in\GF(2)^p$}, 
 \end{align*} 
where $\xi$ is the normalization factor so that $\sum_{\underline{e}\in\GF(2)^p}p^{(\ell)}_{nm}(\underline{e})=1$. \\
\tentative: \\
For each $n=1, \dotsc, N$, the tentatively estimated $v$-th transmitted symbol is given as 
\begin{align*}
\hat{\underline{e}}_n^{(\ell)}=\mathop{\mathrm{argmax}}_{\underline{e}\in \GF(2)^p}p_n^{(0)}(\underline{e})\prod_{m\in M_n}\textstyle q^{(\ell)}_{mn}(\underline{e}). 
\end{align*} 
If $(\hat{\underline{e}}_0, \dotsc, \hat{\underline{e}}_N)$ has the same syndrome as $(\underline{s}_1, \dotsc, \underline{s}_M)$ which is defined in \eqref{221113_6Jul10}, in other words, 
for $m=1, \dotsc, LP$, 
\begin{align*}
 \sum_{n\in N_m}A(\gamma_{mn})\hat{\underline{e}}_n^{(\ell)}=\hat{\underline{s}}_m\in \GF(2)^p 
\end{align*}
for all $c=1, \dotsc, M$, the decoder outputs $(\hat{\underline{e}}_0, \dotsc, \hat{\underline{e}}_N)$ as the estimated error. 
Otherwise, repeat the latter 3 decoding steps. If the iteration round $\ell$ reaches a pre-determined 
number, the decoder outputs {\tt FAIL}.

Note that, in this algorithm,  the correlations between $\mathtt{X}$ errors and $\mathtt{Z}$ errors are neglected. 
In \cite[Section VI, C]{1337106} MacKay et al. used the knowledge about the channel properties for decoding, which improved the decoding performance. 
The most complex part of the decoding is the horizontal step, which requires $O(Nq\log(q))$ multiplications and additions when calculated via FFT, where $q=2^p$. 
\begin{figure*}[t]
\begin{center}
   \includegraphics[width=\textwidth]{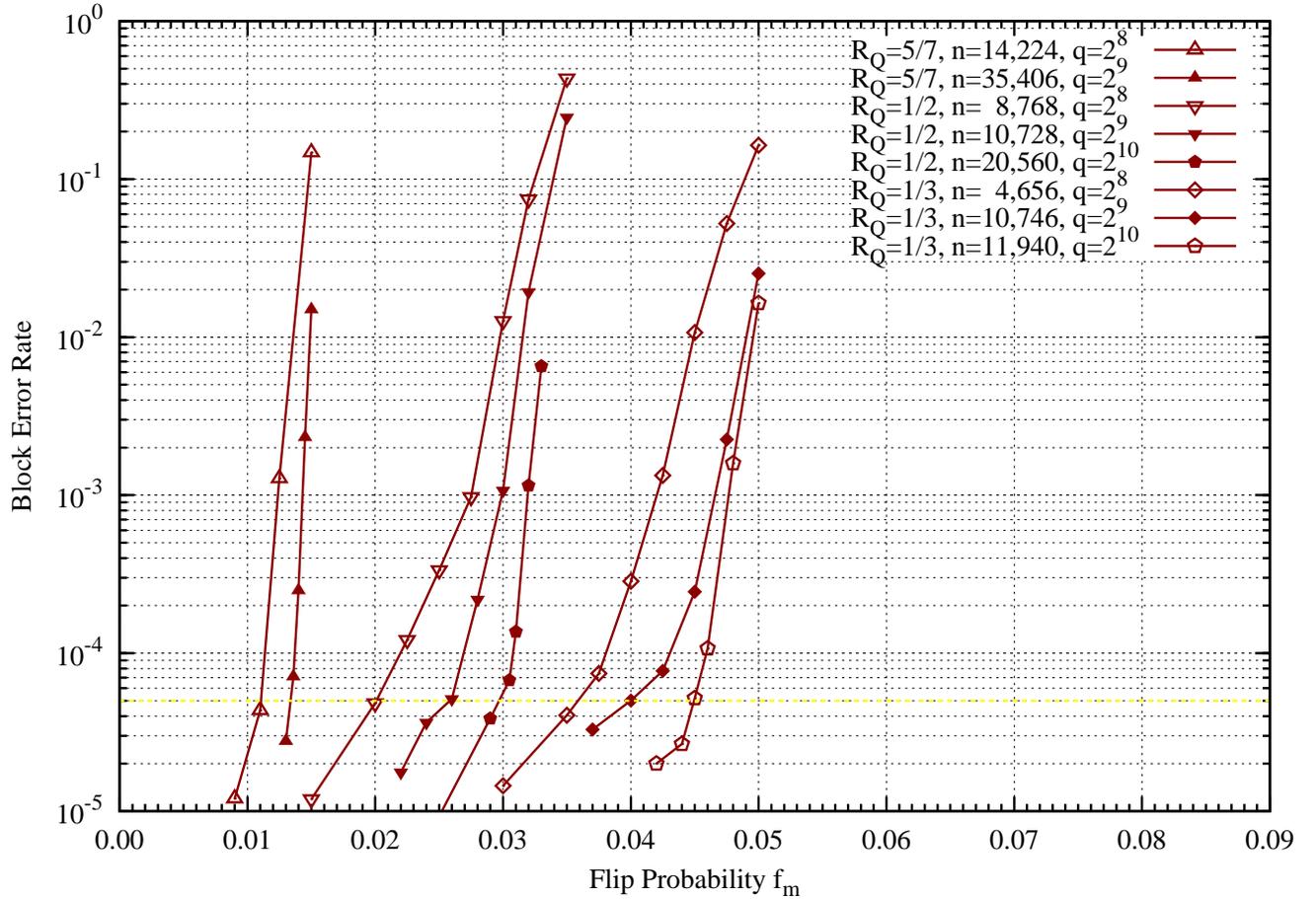}
 \caption{The block error probability of the constituent codes $C$ and $D$ of 
the proposed CSS code pair $(C, D)$ over the depolarizing channel with marginal flip probability $f_\mathrm{m}$ of $\mathtt{X}$ and $\mathtt{Z}$ errors. 
These codes are defined over $\GF(q)$ for $q=2^8, 2^9, 2^{10} $ and have quantum rate $R_\mathrm{Q}=1/3, 1/2, 5/7$. The code length is $n$ qubits.}
 \label{121852_13Jun10}
\end{center}
\end{figure*}
\begin{figure*}[t]
\begin{center}
   \includegraphics[width=\textwidth]{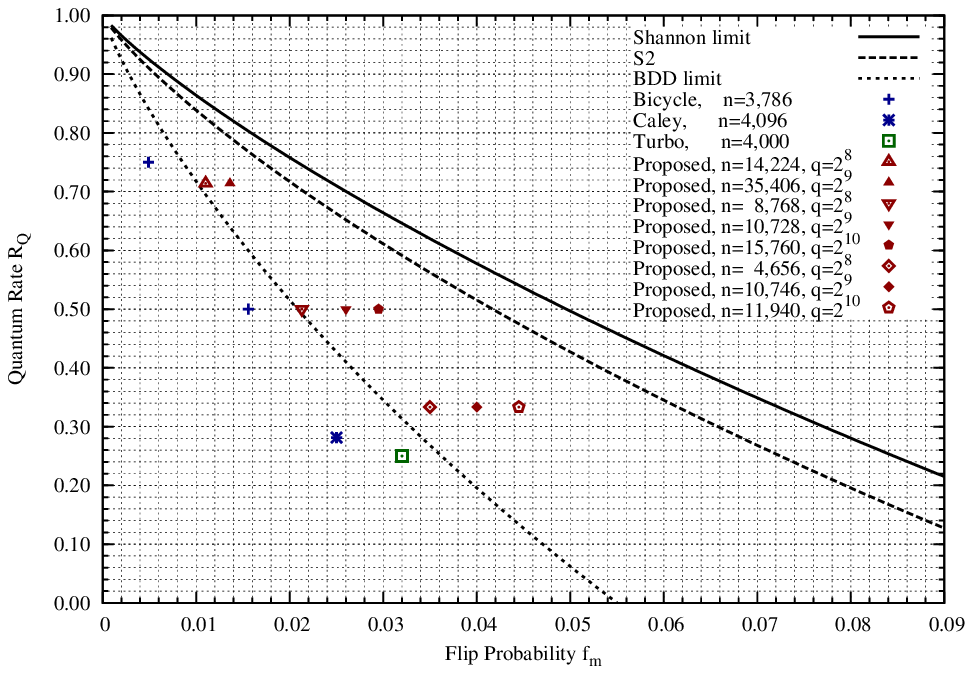}
 \caption{The performance of the proposed CSS code pair $(C, D)$ compared with the best CSS codes so far from \cite{1337106}, \cite{MoreSparseGraph} and  \cite{4957637} over the depolarizing channel with marginal flip probability $f_\mathrm{m}$ of $\mathtt{X}$ and $\mathtt{Z}$ errors. Each point is plotted at which the block probabilities of both two constituent codes are  $5\times 10^{-5}$. The block probability of the entire CSS code is $1-(1-5\times 10^{-5})^2\sim 10^{-4}$. 
The Shannon limit  of the depolarizing channel: $R_\mathrm{Q}=1-h(3f_\mathrm{m}/2)-3f_\mathrm{m}/2\log_2(3)$, where $h(\cdot)$  is the binary entropy function.
The curve labelled S2 is the achievable quantum rate if the correlations between $\mathtt{X}$ errors and $\mathtt{Z}$ errors are neglected: $R_\mathrm{Q}=1-2h(f_\mathrm{m})$. 
The curve labeled BDD is the performance limit when the bounded distance decoder is employed and the correlations between $\mathtt{X}$ errors and $\mathtt{Z}$ errors are neglected. $R_\mathrm{Q}=1-2h(2f_\mathrm{m})$.
The code length is $n$ qubits. The proposed codes are defined over $\GF(q)$.}
 \label{151306_29Jun10}
\end{center}
\end{figure*}
\section{Numerical Result}
\label{231523_10Jun10}
In this section, we demonstrate the proposed CSS code pair decoded by  the algorithm described in the previous section. 
The proposed CSS code pair $(C, D)$ is constructed as follows. 
First, by Theorem \ref{Hagiwara-Imai}, construct $JP\times LP$ binary matrices $\hHC$ and $\hHD$ with $J=2, L$, and $P$. 
Secondly, by the scheme described in Section \ref{230852_10Jun10}, construct $JP\times LP$w non-binary matrices $\HG$ and $\HD$ over $\GF(2^p)$. 
Finally, by the scheme described in Section \ref{223910_10Jun10}, we have $pJP\times pLP$ binary matrices $H_C$ and $H_D$. 
Thus, we obtain $C$ and $D$ are defined by the parity-check matrices $H_C$ and $H_D$, respectively. Note that $C$ and $D$ can not only be viewed as binary codes defined by $H_C$ and $H_D$ but also be viewed 
as non-binary codes defined by $\HG$ and $\HD$. 
The code length of the proposed CSS code is given as $n=pLP$ qubits or equivalently $LP$ symbols. 
The quantum rate $R_\mathrm{Q}$ of the proposed CSS code is given as 
\begin{align*}
 R_\mathrm{Q}=1-2J/L. 
\end{align*}

Fig.~\ref{121852_13Jun10} shows
the block error probability of the constituent codes $C$ and $D$ of the proposed CSS code pair $(C, D)$ over the depolarizing channel with marginal flip probability $f_\mathrm{m}$ of $\mathtt{X}$ and $\mathtt{Z}$ errors. 
Parameter are chosen $J=2$, $L=6, 8$ and  14 for $R_\mathrm{Q}=1/3, 1/2$ and  5/7,  respectively. 
The depolarizing probability is given by $3f_\mathrm{m}/2$.
The correlations between $\mathtt{X}$ errors and $\mathtt{Z}$ errors are neglected. 
Due to the symmetry of construction of $C$ and $D$, the block error probability of the constituent codes $C$ and $D$ are almost the same, hence we plot the block error probability of either $C$ or $D$. 
It is observed that for fixed $q=2^p$ and $R_\mathrm{Q}$, the codes with larger code length tend to have higher error floors. 
This is due to the fact that the proposed codes have poor minimum distance which is upper-bounded by $pL$. 
The error floors can be  improved by using larger $p$, i.e., larger field $\GF(2^p)$, which leads to the requirements of more complex decoding computations $O(Nq\log(q))$, where $q=2^p$. 

Fig.~\ref{151306_29Jun10} compares the proposed quantum codes with the best quantum codes so far. 
The horizontal axis is the flip probability at which the block error probability of one of the constituent classical code is $0.5\times 10^{-4}$. 
The vertical axis is the quantum rate $R_\mathrm{Q}$ of quantum codes. 
Since the proposed CSS codes have constituent classical codes $C$ and $D$ of the same classical rate $R_\mathrm{C}=1-J/L$, the quantum rate $R_\mathrm{Q}$ is given as $R_\mathrm{Q}=2R_\mathrm{C}-1=1-2J/L$.
It can be seen that the proposed codes outperform the state-of-the-art codes. 
In fact, the proposed codes surpass the BDD curve which is the limit of the bounded distance decoder, while the other codes fall inside the BDD curve. 

\section{Conclusion}
We proposed a novel construction method of CSS codes. The resulting CSS codes can be viewed as non-binary LDPC codes over $\GF(2^p)$. 
Due to the sparse representation of the parity-check matrices, the proposed codes are efficiently decoded. 
The simulation results over the depolarizing channels show that the proposed codes outperform 
the other quantum error correcting codes which exhibited the best decoding performance so far. 
The error floors are lowered by increasing the size of the underlying Galois field, i.e., $2^p$. 
\section*{Acknowledgment}
The first author is grateful to R.~Matsumoto for useful discussions. 
The authors wish to thank the anonymous reviewer for his or her helpful comments which improved 
presentation of this paper.
\bibliographystyle{IEEEtran}
\bibliography{IEEEabrv,../../kenta_bib}
\end{document}